\newtheorem{theorem}{Theorem}
\newtheorem{corollary}{Corollary}
\def\dfrac#1#2{\displaystyle{#1\over #2}}
\def\bv{{\bf v}}
\def\bp{{\bf p}}
\def\Div{\mbox{div}\,}
\def\bB{{\bf B}}
\def\bE{{\bf E}}
\begin{document}

\markboth{Rozanova, Chizhonkov}{On the conditions for the breaking of
oscillations in a cold plasma}

%
%

\title{On the conditions for the breaking of
oscillations in a cold plasma}

\author{Olga S. Rozanova}

\address{ Mathematics and Mechanics Department, Lomonosov Moscow State University, Leninskie Gory,
Moscow, 119991,
Russian Federation,
rozanova@mech.math.msu.su}

\author{Eugeniy V. Chizhonkov}

\address{ Mathematics and Mechanics Department, Lomonosov Moscow State University, Leninskie Gory,
Moscow, 119991,
Russian Federation,
chizhonk@hotmail.com}

\subjclass{Primary 35Q60; Secondary 35L60, 35L67, 34M10}

\keywords{Quasilinear hyperbolic system,
plasma oscillations, breaking effect, loss of smoothness}

\maketitle


\begin{abstract}
The Cauchy problem for a quasilinear system of hyperbolic equations describing plane one-dimensional relativistic oscillations of electrons in a cold plasma is considered. For some simplified formulation of the problem, a criterion for the existence of a  global in time solutions is obtained. For the original problem, a sufficient condition for the loss of smoothness is found, as well as a sufficient condition for the solution to remain smooth at least for time $ 2 \pi $. In addition, it is shown that in the general case, arbitrarily small perturbations of the trivial state lead to the formation of singularities in a finite time. It is further proved that there are special initial data such that the respective solution remains smooth for all time, even in the relativistic case. Periodic in space traveling wave gives an example of such a solution. In order for such a wave to be smooth, the velocity of the wave must be greater than a certain constant that depends on the initial data. Nevertheless,  arbitrary small perturbation of general form destroys these global in time smooth solutions.  The nature of the singularities of the solutions is illustrated by numerical examples.

\end{abstract}



\section*{Introduction}	

The hydrodynamic model of "cold" plasma is well known and described in detail in textbooks and monographs on plasma physics (e.g.\cite{GR75}). It is one of the simplest models in which a plasma is considered as a relativistic electron liquid, neglecting collisional and recombination effects, as well as the motion of ions. Physicists have long known that the waves described by such a model tend to break down. This phenomenon is associated with the release of energy and subsequent "heating" of the plasma, which means the inapplicability of the original model after the moment of breaking. There are many works devoted to the breaking effect in different versions of the model with an approximate analysis of this phenomenon, written at the physical level of rigor (in particular,  \cite {WPN92,Trines09,DL16} and references therein).

Currently, attention to this model is due, first of all, to the problems associated with the propagation of
superpower short laser pulses in a plasma \cite{ESL09,BEH16}. During the movement, the pulse excites behind the wake plasma wave, which is used to create special laser accelerators \cite{AK75}. Electrons there can be accelerated to high energies at substantially shorter distances than in traditional devices. A wake wave also has the ability to break down, transferring its energy to plasma particles. From a technical point of view, it is important to find the conditions under which the wake wave does not collapse for as long as possible.

In mathematical modeling of processes in a collisionless cold plasma, the Lagrangian or Euler approaches are used. In the first case, individual particle trajectories are tracked, and the second one is associated with a hydrodynamic description based on partial differential equations (see, for example, \cite{BL85,DnK}). The break down of oscillation signifies the intersection of electron trajectories in the first case, and  the blow up of the electron density in the second case \cite{CH18}.
From a mathematical point of view, this phenomenon means the appearance of a strong (delta-shaped) singularity in the function of  density of charge. A similar phenomenon occurs, for example, in "pressureless" gas dynamics.

In full, three-dimensional form, the equations of hydrodynamics of a "cold" plasma can be treated only numerically. In this paper, we  study a one-dimensional analogue of this model. This simplification allows a rigorous mathematical study of the breakdown phenomenon.

 We consider two different statements of the Cauchy problem for equations describing plane one-dimensional
electronic oscillations in a cold plasma. In one of them, non-relativistic, there is a wider possibility of the existence of a global smooth solution. Namely, the result of this paper allows us to separate the initial data of the Cauchy problem, which describes flat one-dimensional non-relativistic electron oscillations in a cold plasma, into two classes. One class generates a smooth $ 2 \pi$ - periodic solution on the infinite time interval, and the other class leads to the formation of a singularity during the first period of oscillation. For the relativistic statement, in the general case, the initial data of the Cauchy problem can be divided into two classes in a similar way only for a period of time corresponding to the first oscillation.  However, under a special condition on the initial data, one can obtain a criterion for the formation of a singularity and thereby find a class of smooth solutions.

The paper is organized as follows. In Sec.\ref{Sec1} we set the problem of plasma oscillations in the general three-dimensional case and reduce the complete system to the case of one-dimensional plane oscillations both with and without the relativistic effect. In Sec.\ref{Sec2}, a complete analysis of the Cauchy problem for the non-relativistic case is carried out and a criterion for the loss of smoothness in terms of the initial data is obtained. Sec.\ref{Sec3} is devoted to the relativistic case. First, a special situation is considered when the initial data are coupled. In this case, we obtain a criterion  for the formation of  singularity of a smooth solution in terms of the initial data. Further, the situation of arbitrary initial data is considered and we obtain a sufficient condition for preserving the smoothness during the first period of oscillations. This condition does not prohibit breaking after a sufficiently large number of oscillations. Further, we prove a criterion  for the existence of a globally smooth solution in terms of the Hill equation. Then we get  a corollary, which states that for uncoupled initial data, any arbitrarily small perturbation of the trivial solution leads to a finite-time appearance of a singularity. Sec.\ref{Sec4} discusses special class of solutions in the form of a traveling wave. These are functions periodic in space, whose period is the longer, the greater the speed of the wave. A restriction on the wave velocity is obtained, which ensures the global smoothness of the solution in time. Such solutions can have any amplitude, and, in particular, can be arbitrarily small perturbations of the trivial solution. However, they are unstable in the sense that they lose their smoothness when exposed to arbitrarily small disturbances. Sec.\ref{Sec5} presents numerical results that demonstrate the phenomenon of the formation of singularities by the example of typical initial data, the most interesting from the point of view of applications. In particular, it was shown that the nature of the emerging singularities is the same as that of a traveling wave with a low propagation velocity.

\section{Statement of the problem of plasma oscillations}\label{Sec1}

The system of equations of hydrodynamics of a ``cold'' plasma, including hydrodynamic equations together with Maxwell's equations in vector form, has the form (see, for example, \cite {S71,ABR78,SR12}):
\begin{equation}
\label{base1}
\begin{array}{l}
\dfrac{\partial n }{\partial t} + \Div(n \bv)=0\,,\quad
\dfrac{\partial \bp }{\partial t} + \left( \bv \cdot \nabla \right) \bp
= e \, \left( \bE + \dfrac{1}{c} \left[\bv \times  \bB\right]\right),\vspace{0.5em}\\
\gamma = \sqrt{ 1+ \dfrac{|\bp|^2}{m^2c^2} }\,,\quad
\bv = \dfrac{\bp}{m \gamma}\,,\vspace{0.5em}\\
\dfrac1{c} \frac{\partial \bE }{\partial t} = - \dfrac{4 \pi}{c} e n \bv
 + {\rm rot}\, \bB\,,\quad
\dfrac1{c} \frac{\partial \bB }{\partial t}  =
 - {\rm rot}\, \bE\,, \quad \Div \bB=0\,,
\end{array}
\end{equation}
where
$ e, m $ is the charge and mass of the electron (here the charge of the electron has a negative sign: $ e <0 $),
$ c $  is the speed of light, $ n, \bp, \bv $ is the concentration, momentum and velocity of electrons; $ \gamma $ is the Lorentz factor; $ \bE, \bB $ are the vectors of the electric and magnetic fields.

In order to analyze flat one-dimensional relativistic plasma oscillations, the basic equations ~(\ref{base1}) can be significantly simplified.

We will denote the independent variables in the Cartesian coordinate system as $ (x, y, z) $, and apply the assumptions that
\begin{itemize}

\item the solution is determined only by the $ x- $ components of the vector functions $ {\bp}, {\bv}, {\bE}; $

\item there is no dependence in these functions on the variables $ y $ and $ z $, i.e. $ \partial / \partial y = \partial / \partial z = 0. $
\end{itemize}

  Then from (\ref {base1}) we get
\begin{equation}
\begin{array}{c}
\dfrac{\partial n }{\partial t} +
\dfrac{\partial }{\partial x}
\left(n\, v_x \right)
=0,\quad
\dfrac{\partial p_x }{\partial t} + v_x \dfrac{\partial p_x}{\partial x}= e \,E_x,
\vspace{1 ex}\\
\gamma = \sqrt{ 1+ \dfrac{p_x^2}{m^2c^2}}\,, \quad
{v_x} = \dfrac{p_x}{m \,\gamma}, \quad
\dfrac{\partial E_x }{\partial t} = - 4 \,\pi \,e \,n\, v_x\,.
\end{array}
\label{3gl2}
\end{equation}

We introduce dimensionless quantities
$$
\rho = k_p x, \quad \theta = \omega_p t, \quad
V = \dfrac{v_x}{c}, \quad
P = \dfrac{p_x}{m\,c}, \quad
E = -\,\dfrac{e\,E_x}{m\,c\,\omega_p}, \quad
N = \dfrac{n}{n_0},
$$
 where $\omega_p = \left(4 \pi e^2n_0/m\right)^{1/2}$ is the plasma frequency, $n_0$ is the value of the unperturbed electron density, $k_p = \omega_p /c$.

In the new variables,  system (\ref {3gl2}) takes the form
\begin{equation}
\begin{array}{c}
\dfrac{\partial N }{\partial \theta} +
\dfrac{\partial }{\partial \rho}
\left(N\, V \right)
=0,\quad
\dfrac{\partial P }{\partial \theta} + E +
V \dfrac{\partial P}{\partial \rho} = 0, \vspace{1.5 ex}\\
\gamma = \sqrt{ 1+ P^2}, \quad
V = \dfrac{P}{\gamma},\quad
\dfrac{\partial E }{\partial \theta} = N\, V\,.
\end{array}
\label{3gl3}
\end{equation}
From the first and last equations (\ref {3gl3}) it follows
$$
\dfrac{\partial }{\partial \theta}
\left[ N +
\dfrac{\partial }{\partial \rho} E \right] = 0.
$$
This relation is valid both in the absence of plasma oscillations ($ N \equiv 1, \, E \equiv 0 $), and in their presence. Therefore, from here we have a simpler expression for electron density
\begin{equation}
 N = 1 -
\dfrac{\partial  E }{\partial \rho}.
\label{3gl4}
\end{equation}
Using it, we come to equations describing plane one-dimensional relativistic plasma oscillations:
\begin{equation}\label{u1}
\dfrac{\partial P }{\partial \theta}+
V\,\dfrac{\partial P}{\partial \rho}  + E = 0,\quad
\dfrac{\partial E }{\partial \theta}  +
V\,\dfrac{\partial E}{\partial \rho}- V = 0,
\quad V = \dfrac{P}{\sqrt{1+P^2}}\,.
\end{equation}

Here $ \rho $ and $ \theta $ are dimensionless coordinates in space and time, respectively. The variable $ P $ describes the momentum of electrons, $ V $ is the speed of electrons, $ E $ is a function that characterizes the electric field.
Similar statements were previously considered only in the physical literature    (see, e.g.,\cite{david72} and references therein) , where research methods and statements of results differ significantly from the present work.

 Below we study the Cauchy problem for \eqref {u1} with initial conditions
\begin{equation}\label{cd1}
   (P(\rho,0), E(\rho,0)) = ( P_0(\rho), E_0(\rho)) \in C^1 ({\mathbb R}), \rho \in {\mathbb R}.
\end{equation}
in the half-plane $\{(\rho,\theta)\,:\, \rho \in {\mathbb R},\; \theta
> 0\}$.
Moreover, we always consider the initial data with  derivatives uniformly bounded on ${\mathbb R}$.

One can also study  equation \eqref {u1} in a simplified (nonrelativistic) approximation. Namely, under the assumption that the electron velocity $ V $ is small, we have
$
   P= V + \dfrac{V^3}{2} + O(V^5),\,V\to 0.
$
Thus, up to cubic small terms, we can assume that $ P = V $. This assumption allows us to write \eqref {u1} as
\begin{equation}\label{u2}
\dfrac{\partial V }{\partial \theta}  +
V\,\dfrac{\partial V}{\partial \rho}+ E = 0,\quad
\dfrac{\partial E }{\partial \theta} +
V\,\dfrac{\partial E}{\partial \rho} - V
=0,
\end{equation}
while the initial conditions take the form
\begin{equation}\label{cd2}
     (V(\rho,0), E(\rho,0)) = ( V_0(\rho), E_0(\rho)) \in C^1 ({\mathbb R}), \rho \in {\mathbb R}.
\end{equation}

The systems \eqref {u1} and \eqref {u2} are hyperbolic. It is well known that for such systems there exists a local in time unique solution to the Cauchy problem \eqref{cd2} of the same class as the initial data, in our case it is $ C^1 $. It is also known that for such systems the loss of smoothness by a solution occurs in one of the following scenarios: either the solution components themselves go to infinity within a finite time, or they remain bounded, but their derivatives go to infinity \cite{Daf16}. Last opportunity
realized, for example, for homogeneous conservation laws, which include the equations of gas dynamics, where the appearance of a singularity corresponds to the formation of a shock wave. Thus, in order to prove the existence of a global global solution to the Cauchy problem, it is necessary to exclude each of these possibilities.

\section{Analysis of nonrelativistic case}\label{Sec2}

Let us start with  system \eqref {u2}, which is simpler. We write the equations of characteristics for it:
\begin{equation}\label{char2}
     \dfrac{dV}{d\theta}=-E,\quad \dfrac{dE}{d\theta}=V,\quad \dfrac{d\rho}{d\theta}=V.
     \end{equation}
It immediately imlpies $V^2+E^2= V^2(0)+E^2(0)=\rm const$ along each of the characteristics $\rho=\phi(\theta)$, starting from a point $(\rho_0,0)$. Therefore, the solution itself remains bounded and it remains to exclude the second possibility of a singularity formation.

We denote $v=V_\rho$, $e=E_\rho$ and differentiate the system \eqref{u2} with respect to $\rho$.  We get a pair of equations
\begin{equation}\label{char2d}
     \dfrac{dv}{d\theta}=-v^2-e,\quad \dfrac{de}{d\theta}=(1-e)v,
     \end{equation}
which together with \eqref {char2} form the extended system. We see that system \eqref{char2d} does not depend on \eqref{char2}, its phase trajectories on the plane $ (e, v) $ can be found as solutions of the Bernoulli equation
$$
     \dfrac{dv}{de}=\frac{v}{e-1}+ \frac{e}{(e-1)v}, \quad e\ne 1.
$$
This equation can be solved in a standard way, the first integral has the form
\begin{equation}\label{int2}
v^2+2 e -1 =C (e-1)^2.
\end{equation}
The resulting equation \eqref {int2} defines a second-order curve in the $ (e, v) $ - plane that is symmetric about the axis $ v = 0 $, its type depends on the sign of
$$
C=\frac{v^2+2 e -1}{ (e-1)^2},
$$
which coincides with the sign of the numerator
$$
\Delta=v^2+2 e -1.
$$

Namely, if $ \Delta <0 $, then the phase curves are bounded, they surround the singular point $ (0,0) $ (center), and the derivative solutions remain bounded throughout the entire time $ \theta> 0 $. Otherwise, the phase curve is a parabola for $ \Delta = 0 $ or a hyperbola for $ \Delta> 0 $, the derivatives of the solution along it become infinite.

Let us show that this happens within a finite time. Indeed,  \eqref {char2d}, \eqref {int2} implies that $ e $ satisfies the equation
$$
     \dfrac{de}{d\theta}=\pm (e-1) \sqrt{C (e-1)^2-2 e +1},
$$
     therefore
   \begin{equation}\label{razd2}
  \pm \int\limits_{e(\theta_*)}^{e(\theta^*)}  \dfrac{de}{(e-1) \sqrt{C (e-1)^2-2 e +1}}= \theta^*-\theta_*,
     \end{equation}
   where $ (\theta _ *, e (\theta_ *)) $ is the starting point of integration, and $ (\theta ^ *, e (\theta ^ *)) $ is the final one.
     Suppose the contrary, that is, the derivative $ e $ goes to infinity as $\theta \to \infty$. This means that $ \theta^* = \infty $, $ e (\theta ^ *) = \pm \infty $. Thus, the integral on the left side \eqref {razd2} must diverge. However, it is easy to see that this is not so. The resulting contradiction demonstrates that $ e $ (and therefore $ v $), goes to infinity in finite time.

To complete the proof, we note that for $ e = 1 $ the system \eqref {char2d} reduces to the equation
$$
 \dfrac{dv}{d\theta}=-v^2-1,
$$
which can be integratesd elementarily and whose solutions for any initial data turn into minus-infinity in a finite time; the value of $ \Delta $ is also positive in this case.

Using  formula \eqref {razd2}, we can find the period $ T $ of revolution along a closed phase trajectory for $ C <0 $. Namely,
$$
  \frac{T}{2}=\int\limits_{e_-}^{e_+}  \dfrac{de}{(e-1) \sqrt{C (e-1)^2-2 e +1}},\quad e_\pm=1+\frac{1\mp\sqrt{1+C}}{C}.
$$
   Calculating the integral, we get that $ T = 2 \pi $ for all $ C $. This fact is natural, since the $ 2 \pi $ -periodicity of a smooth solution already follows from the equations \eqref {char2}.

Let us summarize our reasoning as a theorem.

\begin{theorem} \label{T1}  For the existence and uniqueness of continuously differentiable  $ 2 \pi- $ periodic in time  solution $ V (\theta, \rho), \, E (\theta, \rho) $ of problem~\eqref {u2},
\eqref {cd2} is necessary and sufficient  that  inequality
\begin{equation} \label {crit2}
 \left (V'_0 (\rho) \right) ^ 2 + 2 \, E'_0 (\rho) -1 <0
\end {equation}
holds at each point $ \rho \in  \mathbb R$.

If there exists at least one point $ \rho_0 $ for which the inequality opposite to \eqref {crit2} holds, then the derivatives of the solution become infinite in a finite time.
\end{theorem}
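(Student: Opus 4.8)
The plan is to repackage the phase-plane analysis of the derivative system \eqref{char2d} as a pointwise criterion that is carried along characteristics. The key structural fact, already visible above, is that \eqref{char2d} is autonomous and decoupled from both \eqref{char2} and the position equation: along the characteristic leaving $(\rho_0,0)$ the pair $(v,e)=(V_\rho,E_\rho)$ traces the phase curve \eqref{int2} with a constant
$$
C=C(\rho_0)=\frac{\Delta_0(\rho_0)}{\bigl(E_0'(\rho_0)-1\bigr)^2},\qquad \Delta_0(\rho_0):=\bigl(V_0'(\rho_0)\bigr)^2+2E_0'(\rho_0)-1,
$$
so the signs of $C(\rho_0)$ and $\Delta_0(\rho_0)$ coincide. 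First I would record that \eqref{crit2} at every $\rho$ is equivalent to $C(\rho_0)<0$ on every characteristic, and note that $C+1=[(V_0')^2+(E_0')^2]/(E_0'-1)^2\ge0$, so any admissible negative $C$ lies in $[-1,0)$ and the phase curve is then a genuine closed curve about the center. The exceptional value $E_0'(\rho_0)=1$, where $C$ is undefined, I would treat directly: there $e\equiv1$ and $\dot v=-v^2-1$, which blows up in finite time, consistent with $\Delta_0=(V_0')^2+1>0$.

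For sufficiency, assume \eqref{crit2} holds everywhere, so $C(\rho_0)<0$ for all $\rho_0$. Then each orbit is bounded, whence $v$ and $e$ stay finite for all $\theta$; together with $V^2+E^2=\mathrm{const}$ from \eqref{char2} the full $C^1$ datum remains bounded along characteristics. The step that needs genuine care is promoting ``bounded along each characteristic'' to ``globally defined classical solution'', that is, excluding the crossing of characteristics. I would argue through the density $N=1-e$: since $\Delta_0<0$ forces $E_0'<\tfrac12$, one has $N_0=1-E_0'>0$, and since a closed orbit cannot meet $e=1$ (there \eqref{int2} would require $v^2=-1$), we get $e<1$ and hence $N>0$ for all time. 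The characteristic Jacobian $J=\partial\rho/\partial\rho_0$ obeys $\dot J=vJ$, while $\dot N=-Nv$ by \eqref{char2d}, so $NJ\equiv N_0>0$ throughout and the Lagrange-to-Euler map is a diffeomorphism at each $\theta$; the solution is therefore globally $C^1$. Uniqueness is the standard local hyperbolic fact propagated by continuation, and $2\pi$-periodicity follows from the computed orbit period $T=2\pi$ together with the $2\pi$-periodicity of $(V,E)$ already implied by \eqref{char2}.

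For necessity and the final assertion, suppose $\Delta_0(\rho_0)\ge0$ at some point. If $E_0'(\rho_0)=1$ this is the elementary blow-up above; otherwise $C(\rho_0)\ge0$, the orbit is a parabola or a hyperbola, and the separated relation \eqref{razd2} has a convergent integral as $e\to\pm\infty$ --- precisely the finite-time blow-up already established in the running text. Thus along that single characteristic $e$, and hence $v$, becomes infinite at a finite $\theta$, the derivatives of the solution blow up, and no global smooth solution, a fortiori no $2\pi$-periodic one, can exist. This yields both the necessity of \eqref{crit2} and the closing statement of the theorem.

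The main obstacle I anticipate is exactly the global-existence upgrade in the sufficiency direction: boundedness of the Lagrangian derivatives $v,e$ does not by itself produce a classical Eulerian solution, and one must rule out the focusing of characteristics. The positivity of $N=1-e$, anchored in the geometric fact that the closed orbits stay strictly inside $\{e<1\}$, is the clean device for this and ties the analytic bound to the physical meaning of breaking as density blow-up. A lesser point to verify rather than assume is uniformity in $\rho_0$: the standing hypothesis of uniformly bounded initial derivatives bounds $\Delta_0$ from above, but should $\sup_\rho\Delta_0=0$ the orbits may be arbitrarily large; one should confirm that pointwise boundedness of each orbit still delivers a (pointwise) global $C^1$ solution, which it does, every orbit being bounded for each fixed $\rho_0$.
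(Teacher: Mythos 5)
Your proposal is correct, and its core is the same as the paper's: the autonomous derivative system \eqref{char2d} decoupled from \eqref{char2}, the first integral \eqref{int2} with ${\rm sign}\, C={\rm sign}\,\Delta_0$, closed orbits for $\Delta_0<0$ versus parabolic/hyperbolic orbits for $\Delta_0\ge 0$, finite-time blow-up read off from convergence of the integral in \eqref{razd2}, the invariant line $e=1$ handled by $dv/d\theta=-v^2-1$, and $2\pi$-periodicity from \eqref{char2}. The one genuine difference is how sufficiency is completed. The paper never discusses characteristic crossing at all: it appeals to the continuation dichotomy recalled at the end of Sec.~\ref{Sec1} (smoothness can only be lost through blow-up of the solution or of its first derivatives, citing \cite{Daf16}), so the along-characteristic bounds on $(V,E)$ and on $(v,e)$ immediately yield the global $C^1$ solution. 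You instead make the Lagrangian-to-Eulerian passage explicit via $\frac{d}{d\theta}(NJ)=0$, hence $NJ\equiv N_0>0$ with $N=1-e$, using that closed orbits stay inside $\{e<1\}$ (in fact $e\le 1/2$ there, which recovers the density bound $N>1/2$ noted in the paper's conclusion). This buys a self-contained argument in place of the cited continuation theorem and ties non-breaking directly to positivity of the density.

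One caution, at precisely the point you flag and then dismiss: positivity of $J$ makes $\rho_0\mapsto\rho(\theta,\rho_0)$ strictly increasing, i.e.\ characteristics never cross, but it does not by itself make this map onto $\mathbb{R}$, and without surjectivity the characteristic construction does not define an Eulerian solution on the whole half-plane. Under the paper's standing assumptions the data themselves may be unbounded (only their derivatives are uniformly bounded) and \eqref{crit2} is only pointwise; then $J=N_0/N$ has no uniform positive lower bound when $\sup_\rho\Delta_0(\rho)=0$, and a strictly increasing map whose derivative is positive but not uniformly so can have bounded range. So your closing claim that pointwise boundedness of each orbit ``delivers a global $C^1$ solution'' is not justified as written. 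It closes easily under either of two extra hypotheses: if the data are bounded, then $|\rho(\theta,\rho_0)-\rho_0|\le |V_0(\rho_0)|+2|E_0(\rho_0)|$ is uniformly bounded and surjectivity is immediate; or if $\sup_\rho\Delta_0<0$, then $J\ge -\Delta_0/\bigl(2(1-E_0')\bigr)$ is uniformly positive and $\rho(\theta,\cdot)$ grows linearly. The paper's proof silently faces the same uniformity issue in its continuation step, so this is not a defect of your route relative to the paper's; but since you raised the point explicitly, it should be settled rather than waved away.
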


 \medskip

In addition to this result, we note that functions of the form$V=W(\theta)\rho +
W_0(\theta)$, $E=D(\theta)\rho + D_0(\theta)$ are solutions of the equation \eqref {u2} if the coefficients  $W(\theta)$ and
$D(\theta)$, i.e. derivative solutions with respect to spatial
variable, obey  \eqref {char2d}. In\cite{CH18}, to construct approximate methods for solving \eqref {u2}, \eqref {cd2},  the  globally smooth solution with the following coefficients is used:
$$
  W(\theta) = \dfrac{s\, \cos (\theta + \theta_0)}{1 + s\, \sin (\theta + \theta_0)}, \quad
  D(\theta) = \dfrac{s\, \sin (\theta + \theta_0)}{1 + s\, \sin (\theta + \theta_0)},
 $$
$$
s = \dfrac{\sqrt{\alpha^2+\beta^2}}{1-\alpha},\;\;
\cos \theta_0 = \dfrac{\beta}{\sqrt{\alpha^2+\beta^2}}, \;\;
\sin \theta_0 = \dfrac{\alpha}{\sqrt{\alpha^2+\beta^2}}, \quad \alpha = D(0),\;\; \beta = W(0).
$$

\bigskip

\section{Analysis of the relativistic case}\label{Sec3}
We proceed to the study of  system  \eqref{u1}.  We write a system of characteristics for it:
\begin{equation}\label{char1}
     \dfrac{dP}{d\theta}=-E,\quad \dfrac{dE}{d\theta}=\dfrac{P}{\sqrt{1+P^2}},\quad \dfrac{d\rho}{d\theta}=\dfrac{P}{\sqrt{1+P^2}}.
     \end{equation}
It implies
\begin{equation}\label{first_int}
2\sqrt{1+P^2}+E^2=2\sqrt{1+P_0^2(\rho_0)}+E_0^2(\rho_0)=C_1(\rho_0)\ge 2
\end{equation}
along a characteristic starting from the point $ (\rho_0,0) $. Therefore, the solution itself remains bounded, as in the previous case.

  We also note that the estimate
$$
  1\le\sqrt{1+P^2}\le \frac{C_1(\rho_0)}{2}
$$
holds.

The period $ T (\rho_0) $ can be calculated as
$$
  {T (\rho_0)}=2\, \int\limits_{P_-}^{P_+}  \dfrac{dP}{ \sqrt{C_1 (\rho_0)-2\sqrt{1+P^2}}},\quad
  P_\pm=\pm\frac{\sqrt{C_1^2(\rho_0)-4}}{2},
$$
we omit the argument $ \rho_0 $ hereinafter for brevity. The period tends to $ 2 \pi $ at $ C_1 \to 2 $, but increases with $ C_1 $.
In contrast to the nonrelativistic case, when along each
characteristic the period is the same and equal to $ 2 \pi $, in the relativistic case the period on each characteristic is different.

     Let us build an extended system. To do this, we additionally denote $ p = P_\rho $ and differentiate \eqref{u1} with respect to $\rho$, and obtain a pair of equations
\begin{equation}\label{char1d}
     \dfrac{dp}{d\theta}=-e-\frac{p^2}{(1+P^2)^{3/2}},\quad \dfrac{de}{d\theta}=(1-e)\frac{p}{(1+P^2)^{3/2}},
     \end{equation}
We see that, unlike the previous case,  system \eqref{char1d} cannot be considered separately from \eqref{char1}. This increases its dimension and seriously complicates the analysis. It is necessary to attach the equation
\begin{equation}\label{char1d+}
     \dfrac{dP}{d\theta}=- E(P)=-\sqrt{C_1-2\sqrt{1+P^2}},
     \end{equation}
describing the behavior of the bounded function $ P $, the root can be extracted with both signs. The phase trajectory now belongs to three-dimensional space and the method used in the previous paragraph is not applicable here. System \eqref{char1d} can be considered as an inhomogeneous system of two equations with nonlinear resonant terms.
\bigskip

\subsection{Special case }\label{Sec3.1} We start with the case when the constant $ C_1 $ in  integral \eqref{first_int} is the same for all points $ \rho_0 \in \mathbb R $. In other words, the variable $ E $ can be expressed in terms of $ P $ all over the half-plane ${\mathbb R}\times{\mathbb R}_+$ and the system \eqref{u1} reduces to one equation
\begin{equation*}\label{u11}
\dfrac{\partial P }{\partial \theta}+
\dfrac{P}{\sqrt{1+P^2}}\,\dfrac{\partial P}{\partial \rho} = -E(P), \quad E(P)=\pm \sqrt{C_1-2\sqrt{1+P^2}},
\end{equation*}
for which a constant solution should be excluded.
 Since it follows from \eqref{first_int} that
\begin{equation}\label{pe}
e= - \frac{pP}{E(P)\sqrt{1+P^2} },\quad E=E(P),
\end{equation}
then the first equation \eqref{char1d} takes the form
\begin{equation}\label{p}
  \dfrac{dp}{d\theta}=\frac{pP}{E(P)\sqrt{1+P^2} }-\frac{p^2}{(1+P^2)^{3/2}}.
\end{equation}
Consider \eqref {p} together with \eqref{char1d+} and get a linear equation for the variable $ s = p^{- 1} $:
\begin{equation*}\label{sp}
  \dfrac{ds}{d P}=\frac{Ps}{E^2(P)\sqrt{1+P^2}}  -    \frac{1}{E(P) (1+P^2)^{3/2}}.
\end{equation*}
After integration  we obtain
\begin{equation}\label{pP}
  p= \frac{E(P)\sqrt{1+P^2}}{C_2\sqrt{1+P^2}-P},
\end{equation}
the constant $ C_2 $ is found from the initial data $ p_0 (\rho_0) =P'_0(\rho_0)$.
\begin{theorem}\label{T2}
Let the identity  $2\sqrt{1+P_0^2}+E_0^2\equiv \rm const$ hold with a constant that is independent of  $\rho_0\in \mathbb R$.
\begin{itemize}
\item If   for all  $\rho_0\in \mathbb R$ we have
\begin{equation}\label{C2}
C_2=\frac{E_0(\rho_0)}{P_0'(\rho_0)}+\frac{ P_0(\rho_0)}{\sqrt{1+P_0^2(\rho_0)}}\notin [-M_*,M_*],
\end{equation}
where $M_*=\frac{\sqrt{C^2_1-4}}{C_1}$,
   $ C_1 $ is defined in \eqref{first_int}, then the solution to problem \eqref{u1}, \eqref{cd1} remains smooth for all $ t> 0 $.

\item If the condition opposite to \eqref{C2} is satisfied at least for one $\rho_0$, then in a finite time the derivatives of solution become unbounded.
\end{itemize}
\end{theorem}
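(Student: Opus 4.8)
The plan is to exploit the explicit representation \eqref{pP} for the spatial derivative $p=P_\rho$ along a characteristic. First I would rewrite it in the more transparent form obtained by factoring $\sqrt{1+P^2}$ out of the denominator,
\begin{equation*}
p=\frac{E(P)}{C_2-V},\qquad V=\frac{P}{\sqrt{1+P^2}},
\end{equation*}
and verify (by differentiating $V+E/p$ along \eqref{char1}, \eqref{char1d}) that $C_2=V+E/p$ is genuinely conserved on the characteristic, so that the single constant $C_2$ given at $\theta=0$ by \eqref{C2} controls the whole arc, turning points included. The companion formula for $e=E_\rho$ is $e=-V/(C_2-V)$; thus both derivatives blow up precisely where the denominator $C_2-V$ vanishes, while their numerators stay bounded because $P$, and hence $E(P)$ and $V$, are bounded by the first integral \eqref{first_int} (in particular $|E(P)|\le\sqrt{C_1-2}$).

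Next I would identify the range of $V$ along a characteristic. Since $P$ oscillates between the turning points $P_\pm=\pm\tfrac12\sqrt{C_1^2-4}$, the zeros of $E(P)$ in \eqref{char1d+}, the relation $1+P_\pm^2=C_1^2/4$ gives $V(P_\pm)=\pm M_*$ with $M_*=\sqrt{C_1^2-4}/C_1$; by monotonicity of $V$ in $P$, the value $V$ sweeps the whole segment $[-M_*,M_*]$ during each period. Hence $C_2-V$ can vanish for some attained $P$ if and only if $C_2\in[-M_*,M_*]$, which is exactly the dichotomy in the statement.

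For the smoothness half, assume \eqref{C2}, i.e. $|C_2|>M_*$ at every $\rho_0$. Then $|C_2-V|\ge |C_2|-M_*>0$ uniformly along each characteristic, so $p=E(P)/(C_2-V)$ and $e=-V/(C_2-V)$ stay bounded for all $\theta$. As the solution itself is bounded by \eqref{first_int}, neither blow-up scenario for a $C^1$ quasilinear solution can occur, and the standard continuation argument yields a global smooth solution. For the blow-up half, suppose $C_2\in(-M_*,M_*)$ for some $\rho_0$: there is a unique interior $P_*\in(P_-,P_+)$ with $V(P_*)=C_2$ and $E(P_*)\ne0$; since $P$ passes through $P_*$ at the nonzero speed $|dP/d\theta|=|E(P_*)|$, it reaches $P_*$ at a finite $\theta_*$ (bounded by a half-period, finite by the period formula), and there $p\to\pm\infty$.

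The hard part will be the boundary case $C_2=\pm M_*$, included in the closed interval $[-M_*,M_*]$, where the vanishing of the denominator in \eqref{pP} coincides with the turning point $P=P_\pm$, so that the numerator $E(P)\sqrt{1+P^2}$ vanishes there as well. I would resolve this $0/0$ indeterminacy by a local expansion: with $P=P_+-\varepsilon$ one gets $C_1-2\sqrt{1+P^2}\sim 2M_*\,\varepsilon$, hence the numerator $\sim\sqrt{\varepsilon}$, while the denominator satisfies $C_2\sqrt{1+P^2}-P\sim(1-M_*^2)\varepsilon=(4/C_1^2)\,\varepsilon$; therefore $p\sim\mathrm{const}\cdot\varepsilon^{-1/2}\to\infty$. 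Since $\varepsilon^{-1/2}$ is integrable near $0$, the turning point is reached in finite time and the blow-up persists, justifying the closed endpoints. Finally I would dispose of the degenerate characteristics with $P_0'(\rho_0)=0$ (formally $C_2=\infty$, so \eqref{C2} holds): there $p\equiv0$, $e\equiv0$ solve \eqref{char1d}, consistent with the smoothness conclusion.
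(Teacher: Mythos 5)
Your proposal is correct and follows essentially the same route as the paper: both rest on the explicit representation \eqref{pP} of $p$ along a characteristic, together with the observation that its denominator vanishes at an attainable momentum if and only if $C_2$ lies in the range $[-M_*,M_*]$ of $V=P/\sqrt{1+P^2}$ between the turning points $P_\pm=\pm\tfrac12\sqrt{C_1^2-4}$. Your write-up is in fact more complete than the paper's two-sentence proof, which leaves implicit the conservation of $C_2$, the finite-time attainment of the critical momentum (via the finiteness of the period $T(\rho_0)$), and the $0/0$ resolution at the endpoint case $C_2=\pm M_*$ where the numerator of \eqref{pP} vanishes together with the denominator.
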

\begin{proof}  The derivative $ p $ is bounded over the entire range of $ P $, if the denominator in \eqref{pP} does not vanish for $ P^2 \le \frac {C_1^2}{ 4} -1 $.
The constant $ M_* $ is the maximum of the function $\frac{P}{\sqrt{1+P^2}}$ on the segment $[0,\sqrt{\frac{C_1^2}{4}-1}]$.
\end{proof}

It is easy to see that for sufficiently large values of $ p(\rho_0) $ the condition \eqref{C2} is certainly not satisfied.
\bigskip

\subsection{General case }\label{Sec3.2}
Denote by $K(\theta)=(1+P^2(\theta))^{-3/2}$ the periodic function found from \eqref{char1d+}, for which the  following estimate holds:
\begin{equation}\label{estK}
0<K_-=\frac{8}{C_1^3} \le K(\theta)\le 1.
\end{equation}

Introduce new variables $u={e}/{p}$, $\lambda=(1-e)/{p}$, in which the system \eqref{char1d} can be written quite simply:
\begin{equation}\label{supp1}
 \dfrac{du}{d\theta}=u^2+K,\quad \dfrac{d\lambda}{d\theta}=\lambda u.
\end{equation}

It immediately follows from the first equation \eqref{supp1} and \eqref{estK} that $ u $ goes to plus-infinity in a finite time, which can be estimated from above and from below. This can happen for two reasons: $ p $ vanishes at a finite $ e $ or $ p $ and $ e $ both go to infinity. Similarly, turning $ u $ to zero means that $ p $ goes to infinity or $ e $ goes to zero. Analysis of phase trajectories shows that in the region $ e \ge 1 $ the variable $ p $ always goes to infinity in a finite time. In the region $ e \in (0,1), \, p> 0 $, the trajectory reaches the boundary $ p = 0 $, and in the regions $ e \in (0,1), \, p <0 $ and $ e <0, \, p> 0 $, the trajectory reaches  the boundary $ e = 0 $. In the quadrant $ e <0, \, p <0 $, where $ u> 0 $, the value $ u $
  can go to infinity in different ways. If $ p $ vanishes, then the trajectory makes a revolution around the origin, otherwise $ p $ goes to infinity in a finite time. We need to distinguish between these situations.

For this, we note that ${p}^{-1} = \lambda+u $ and this expression can be evaluated on both sides as
$$
\psi_-(\theta)\le \frac{1}{p}\le \psi_+(\theta),
$$
where
$$
 \psi_-(\theta)=\sqrt{K_-}\tan \left(\sqrt{K_-}\theta+\arctan \frac{u_0}{\sqrt{K_-}}\right)+  \frac{\lambda_0\sqrt{1+\tan^2\left(\sqrt{K_-}\theta+\arctan \frac{u_0}{\sqrt{K_-}}\right)}}{\sqrt{\frac{u^2_0}{K_-}+1}},
$$
$$
\psi_+(\theta)=\tan \left(\theta+\arctan {u_0}\right)+\frac{\lambda_0\sqrt{1+\tan^2\left(\theta+\arctan {u_0}\right)}}{\sqrt{u_0^2+1}}.
$$
If in the domain of non-positivity the value $ {p}^{- 1} $ vanishes before it goes to infinity, then the solution loses smoothness in a finite time. We must find the conditions under which  $ \psi_- (\theta) $ vanishes till the end of the period of $ \psi_- (\theta) $ as $\theta$ increases. This happens if
$\lambda_0^2\le u_0^2 + K_-,$ that is, $$ K_- p_0^2\ge 1-2 e_0. $$
If $\lambda_0^2> u_0^2 + 1,$ then
  $ \psi _ + (\theta)\ne 0 $ till the end of the period $\pi$ of $ \psi_+ (\theta) $. Thus, if
$$ p_0^2 <1-2 e_0, \, p_0<0, $$
then the trajectory does not go to infinity in the lower half-plane, but passes into the upper half-plane and makes a half-turn there.
This condition coincides with \eqref {crit2} for negative $p_0$. In this case, \eqref {crit2} does not provide a sufficient condition for maintaining global smoothness, but only means that the phase trajectory makes at least one revolution around the origin.

If $ p_0 \ge 0 $, we can only guarantee that the trajectory will fall into the lower half-plane, but the condition $ p^2 <1-2 e $ may not be satisfied there.

Note that if
$2\sqrt{1+P^2}+E^2\equiv \rm const$,  condition \eqref {pe} signifies that the first equation \eqref {supp1} is a consequence of \eqref{char1d+}, and $ u $ goes to infinity or zero when $ E $ or $ P $ go to zero, respectively.

\medskip

 So, our result is as follows.
\begin{theorem}\label{T3}
Let  the expression
 $2\sqrt{1+P_0^2}+E_0^2$ be not equal to constant identically. Then if there exists at least one point $ \rho_0 $ for which the inequality
$$
K_-\left(P'_0(\rho_0)\right)^2 + 2\, E'_0(\rho_0)-1\ge 0,\quad K_-=\frac{8}{(2\sqrt{1+P^2(\rho_0)}+E^2(\rho_0))^3},
$$
holds,  then the derivatives of solution to \eqref{u1}, \eqref{cd1} become infinite within a finite time not exceeding the period of oscillation $ T (\rho_0) $.

If for all $ \rho $ condition
$$
 \left(P'_0(\rho)\right)^2 + 2\, E'_0(\rho)-1< 0, \quad P'_0(\rho)<0,
$$
holds, this ensures  smoothness of the solution for at least a period $ T (\rho) $ on each characteristic, that is, up to time $t_*=\inf\limits_{\rho\in \mathbb R} T(\rho)$, $t_*>2\pi$.

\end{theorem}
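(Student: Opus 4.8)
The plan is to study, along a fixed characteristic, the reduced pair \eqref{supp1} for $u=e/p$ and $\lambda=(1-e)/p$ together with the identity $p^{-1}=\lambda+u$ and the two frozen-coefficient barriers $\psi_-(\theta)\le p^{-1}\le\psi_+(\theta)$ obtained above, in which $K$ is replaced by its bounds $K_-$ and $1$. The governing dichotomy is that a derivative singularity occurs exactly when $p^{-1}$ reaches $0$, so that $p\to\infty$, whereas a divergence $p^{-1}\to\pm\infty$ merely records $p\to0$, a smooth crossing of the axis $p=0$ after which the phase point continues its revolution about the origin. Everything therefore reduces to deciding, from the signs and sizes in the initial data, whether $p^{-1}$ hits $0$ before it diverges.

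First I would record two elementary translations. Since $\lambda_0^2-u_0^2=(1-2e_0)/p_0^2$, the blow-up hypothesis $K_-p_0^2+2e_0-1\ge0$ is identical to $\lambda_0^2\le u_0^2+K_-$, and the smoothness hypothesis $p_0^2+2e_0-1<0$ is identical to $\lambda_0^2>u_0^2+1$. These are precisely the thresholds separating the cases in which the lower barrier $\psi_-$, respectively the upper barrier $\psi_+$, does or does not possess a zero inside one of its $\tan$-periods; one checks this by solving $\sin\Phi=-\lambda_0/\sqrt{u_0^2+K_-}$ and its $K=1$ counterpart, which is solvable exactly under the stated inequalities.

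For the first assertion, $\lambda_0^2\le u_0^2+K_-$ makes $\psi_-$ vanish at some finite $\theta_1$ within its first period; because $p^{-1}\ge\psi_-$ and $p^{-1}(0)<0$ in the decisive lower half-plane $p_0<0$ (the remaining sign patterns reducing to this case, or to the immediate blow-up region $e_0\ge1$, via the quadrant description preceding the theorem), continuity forces $p^{-1}$ to cross $0$ on $[0,\theta_1]$, i.e. the derivatives blow up. The step I expect to be the main obstacle is the quantitative claim that this occurs before one physical period $T(\rho_0)$ has elapsed: the frozen barrier $\psi_-$ lives on the scale $\pi/\sqrt{K_-}=\pi C_1^{3/2}/(2\sqrt2)$, which for large $C_1$ is far longer than $T(\rho_0)\sim4\sqrt{C_1}$, so the crude barrier does not by itself locate the singularity inside $[0,T(\rho_0)]$. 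To close this gap I would instead use the genuine $T(\rho_0)$-periodicity of $K(\theta)=(1+P^2)^{-3/2}$: over one forcing period the derivative trajectory must either close into a loop or escape to infinity, and the initial-data condition excludes the loop, forcing escape within $T(\rho_0)$. Turning this loop-or-escape heuristic into a proof — for instance via a Sturm comparison for the Hill form $w''+K(\theta)w=0$ equivalent to the Riccati equation for $u$ — is the technical heart.

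For the second assertion the sign $p_0<0$ gives $p^{-1}(0)<0$, and $\lambda_0^2>u_0^2+1$ ensures $\psi_+$ has no zero and hence stays negative up to its asymptote; since $p^{-1}\le\psi_+<0$, the quantity $p^{-1}$ cannot reach $0$ in the lower half-plane but instead tends to $-\infty$, so $p\to0^-$ with $e$ bounded and the phase point crosses $p=0$ smoothly into the upper half-plane. I would then invoke the quadrant-by-quadrant description established before the theorem to route the trajectory back around the origin without blow-up, completing a revolution in time $T(\rho)$; doing this on every characteristic and using that the relativistic period obeys $T(\rho)>2\pi$ whenever $C_1(\rho)>2$ yields smoothness up to $t_*=\inf_\rho T(\rho)$. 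I would finally flag that the strict bound $t_*>2\pi$ requires the period formula together with a uniform lower bound on $C_1(\rho)$, and that—as in the first part—following the trajectory through the remaining quadrants under the non-autonomous coupling is the most delicate point.
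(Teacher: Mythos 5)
Up to the point where you flag the time bound, your argument \emph{is} the paper's own proof of Theorem~\ref{T3}: the paper argues exactly through the variables $u=e/p$, $\lambda=(1-e)/p$ of \eqref{supp1}, the identity $p^{-1}=u+\lambda$, the frozen-coefficient barriers $\psi_\pm$, the translations $K_-p_0^2+2e_0-1\ge 0\Leftrightarrow\lambda_0^2\le u_0^2+K_-$ and $p_0^2+2e_0-1<0\Leftrightarrow\lambda_0^2>u_0^2+1$, the identification of blow-up with $p^{-1}$ reaching $0$ (as opposed to $p^{-1}\to\pm\infty$, a smooth crossing of $p=0$), and the quadrant routing; your treatment of the second assertion, including the caveats about completing the revolution and about $t_*>2\pi$, coincides with the paper's, which is equally informal there.

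The gap you single out is genuine, and it is a gap in the paper's text as well: the barrier $\psi_-$ vanishes only on the scale of its own quarter-period, roughly $\pi/(2\sqrt{K_-})=\pi C_1^{3/2}/(4\sqrt{2})$, whereas $T(\rho_0)$ grows only like $\sqrt{C_1}$, so for large $C_1$ the barrier argument cannot by itself yield ``blow-up within $T(\rho_0)$''; the paper supplies no further reasoning for that bound. However, neither of your proposed repairs works as stated. The loop-or-escape dichotomy is false for a non-autonomous $T$-periodic planar system --- trajectories of \eqref{char1d} may, and as the paper remarks immediately after Theorem~\ref{T3} actually do, spiral for several periods before entering the blow-up region --- so nothing forces escape within one forcing period. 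And a Sturm comparison of $z''+K(\theta)z=0$ against the constant coefficients $K_-\le K\le 1$ merely reproduces the same $\pi/\sqrt{K_-}$ zero-spacing that made $\psi_-$ too slow.

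What closes the gap is an ingredient missing from both your proposal and the paper's discussion, although it lives inside the Theorem~\ref{T4} framework: this particular Hill equation has the explicit solution $z=E(\theta)$. Differentiating \eqref{char1} gives $E''=(1+P^2)^{-3/2}P'=-K(\theta)E$, and the zeros of $E$ along a characteristic occur at the turning points of $P$, hence are exactly $T(\rho_0)/2$ apart. By Sturm separation, the solution $z$ of \eqref{Hill} with $z(0)=1$, $z'(0)=-u_0$ has its first zero $\theta_1\le T(\rho_0)$. In the decisive quadrant $e_0<0$, $p_0<0$ one has $u_0>0$ and $\lambda_0<-u_0<0$; on $[0,\theta_1]$ the function $z$ is positive and $z'$ strictly decreasing, and the identity $(z'^2)'=-K\,(z^2)'$ together with $K\ge K_-$ and $z(\theta_1)=0$ gives $z'(\theta_1)^2\ge u_0^2+K_-$. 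Your hypothesis $\lambda_0^2\le u_0^2+K_-$ then places $\lambda_0$ between $z'(\theta_1)$ and $z'(0)=-u_0$, so $z'$ passes through $\lambda_0$ at some $\theta_*\le\theta_1\le T(\rho_0)$, which by Theorem~\ref{T4} is exactly the blow-up of $p$ and $e$. Finally, note that both you and the paper leave the reduction of other initial sign patterns to this quadrant informal: the hypothesis, imposed at $\theta=0$, is not known to persist until the trajectory enters the region $e<0$, $p<0$, so that step would need a separate argument as well.
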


Since  system \eqref {char1d} is not autonomous, at the next turnover the trajectory may already go into the domain ensuring a singularity formation. Numerical experiments show that this is exactly what is happening. Apparently, the trajectories of the solution \eqref {char1d} are moving away from the origin due to nonlinear resonance. To prove this fact, estimates alone are not enough, and an explicit form of the function $ K $ is required.


We note that the following theorem is also valid, which connects the problem of the appearance of a singularity of the system of plasma oscillations with the theory of linear equations with periodic coefficients.
\begin{theorem}\label{T4}
Let the expression $2\sqrt{1+P_0^2}+E_0^2$ be not equal to constant identically.  We denote
$$
\dfrac{E'_0(\rho_0)}{P'_0(\rho_0)} = u_0, \quad
\dfrac{1-E'_0(\rho_0)}{P'_0(\rho_0)} = \lambda_0
$$ under the assumption that
$P'_0(\rho_0)\ne 0$. Let $z(\theta)$ be the solution to the Cauchy problem for the Hill equation
\begin{equation}\label{Hill}
\dfrac{d^2 z}{d\theta^2}+K(\theta) z=0,\quad z(0)=1,\, z'(0)=-u_0.
\end{equation}
If for at least one point $ \rho_0 $ there exists a moment of time
$\theta_*>0$ such that $z'(\theta_*)=\lambda_0$, then the derivatives of solution to \eqref {u1}, \eqref {cd1} become infinite in a finite time.

Otherwise, the solution remains smooth for all $ \theta> 0 $.
\end{theorem}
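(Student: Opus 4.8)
The plan is to exploit the substitution that already linearized the system in \eqref{supp1}. Recall that in the variables $u=e/p$, $\lambda=(1-e)/p$ the derivative system \eqref{char1d} became $\frac{du}{d\theta}=u^2+K$, $\frac{d\lambda}{d\theta}=\lambda u$. The key observation is that the Riccati equation $u'=u^2+K$ is linearized by the standard substitution $u=-z'/z$: differentiating gives $u'=-z''/z+(z'/z)^2=-z''/z+u^2$, so $u'=u^2+K$ is equivalent to $-z''/z=K$, i.e.\ the Hill equation $z''+K(\theta)z=0$. The initial condition $u(0)=u_0$ becomes $z'(0)/z(0)=-u_0$, which is exactly the normalization $z(0)=1,\ z'(0)=-u_0$ in \eqref{Hill}. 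So first I would establish that $u(\theta)=-z'(\theta)/z(\theta)$ as long as $z(\theta)\neq 0$, and that a zero of $z$ corresponds precisely to $u\to+\infty$ (which, as noted after \eqref{supp1}, signals $p\to 0$ or a blow-up of $p$ and $e$).

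**Reformulating the blow-up condition.**
Next I would translate the geometric criterion from the proof of Theorem~\ref{T3} into the language of $z$. The blow-up of the derivatives occurs exactly when $p^{-1}=\lambda+u$ vanishes while the trajectory is in the relevant quadrant, i.e.\ when $\lambda=-u$ at some finite time. Since the second equation $\lambda'=\lambda u$ integrates to $\lambda(\theta)=\lambda_0\exp\!\big(\int_0^\theta u\,ds\big)$ and $\int_0^\theta u\,ds=-\int_0^\theta (z'/z)\,ds=-\ln|z(\theta)|$ (using $z(0)=1$), we get $\lambda(\theta)=\lambda_0/z(\theta)$. Meanwhile $u(\theta)=-z'(\theta)/z(\theta)$, so the condition $p^{-1}=\lambda+u=0$ reads $\lambda_0/z-z'/z=0$, i.e.\ $z'(\theta)=\lambda_0$. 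Thus the event ``$p^{-1}$ vanishes'' is identical to ``$z'(\theta_*)=\lambda_0$ for some $\theta_*>0$.'' I would verify this computation carefully, since it is the heart of the theorem: it converts the nonlinear singularity-formation question into a question about whether the derivative of a solution of the Hill equation attains a prescribed value.

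**Finishing and the main obstacle.**
With this dictionary in place the two alternatives follow. If $z'(\theta_*)=\lambda_0$ for some $\rho_0$ and some $\theta_*>0$, then $p^{-1}\to 0$ there, so $p=P_\rho\to\infty$ and the solution loses smoothness in finite time. Conversely, if $z'(\theta)\neq\lambda_0$ for all $\theta>0$, then $p^{-1}=\lambda+u$ never vanishes and stays bounded away from zero on each period, keeping $p$ (and by \eqref{pe}-type relations $e$) finite for all $\theta$, so the solution stays smooth. I would need to handle the excluded case $P_0'(\rho_0)=0$ separately (there the initial datum already has $p=\infty$ in these coordinates, or the characteristic is degenerate and must be treated directly from \eqref{char1d}). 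The main obstacle I anticipate is not the algebra of the substitution but justifying that $z'(\theta_*)=\lambda_0$ is genuinely \emph{equivalent} to loss of smoothness rather than merely sufficient: one must rule out that $p^{-1}$ could vanish through the other mechanism (the $u\to\infty$ route, i.e.\ a zero of $z$) \emph{before} $z'=\lambda_0$ is reached, and confirm that a zero of $z$ alone, without $z'=\lambda_0$, does not by itself force blow-up of $P_\rho$. Pinning down which of $z(\theta_*)=0$ versus $z'(\theta_*)=\lambda_0$ controls the singularity, and showing the latter is the sharp condition, is the delicate point that the careful bookkeeping in the preceding general-case analysis is designed to settle.
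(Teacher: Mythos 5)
Your proposal is correct and follows essentially the same route as the paper: the substitution $u=-z'/z$ turning the Riccati equation from \eqref{supp1} into the Hill equation \eqref{Hill}, the integration $\lambda=\lambda_0/z$, and the identification of blow-up with $z'(\theta_*)=\lambda_0$. The only cosmetic difference is that you track $p^{-1}=\lambda+u=(\lambda_0-z')/z$ while the paper tracks $u/\lambda=e/(1-e)=-z'/\lambda_0$ (which makes the "delicate point" you flag immediate, since zeros of $z$ then give $p\to 0$ with $e$ finite rather than blow-up); your own formula settles it just as quickly.
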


\begin{proof}  The substitution $ u = -z '/ z $ reduces the first of equations \eqref{supp1} to the Hill equation \eqref{Hill}. Due to the homogeneity of the latter, we can set $ z (0) = 1 $, then $ z '(0) = - u_0 $. According to the second of the equations \eqref {supp1} we have $u=\lambda'/\lambda$,
therefore $\lambda=\lambda_0/z$. Then
$$
\dfrac{u}{\lambda} = \dfrac{e}{1-e} = -\,\dfrac{z'}{\lambda_0}.
$$
If $e$
turns into infinity at
 $\theta=\theta_*$, then
$-\dfrac{z'(\theta_*)}{\lambda_0}=-1$, i.e.
$z'(\theta_*)=\lambda_0$. If there is no such moment of time, then $ e $, and
 $ p $ remain bounded. The theorem is proved.
\end{proof}


 It is easy to verify that for $ K = 1 $, that is, for the case of nonrelativistic oscillations, we obtain   criterion \eqref {crit2} as a corollary.

\medskip
In order to use Theorem \ref{T4}, we also need the explicit form of $ P $. We can get it by making an assumption about the smallness of oscillations. Namely, the following corollary holds.
\begin{corollary}\label{С1}
Any solution to the Cauchy problem \eqref {u1}, \eqref {cd1} which is an arbitrarily small deviation from
equilibrium state $ P = 0 $, $ E = 0 $ and for which $ 2 \sqrt {1 + P_0 ^ 2} + E_0 ^ 2 $ is not equal to a constant identically, the derivatives of solution become infinite in a finite time.
\end{corollary}

\begin{proof}
If  $ C_1 = 2 + \epsilon^2 $, $ \epsilon \ll 1 $, then  $ P $ and $ E $ are small at any time, provided that the solution remains smooth. Let us choose $ P_0 (\rho_0) $ and $ E_0 (\rho_0) $ sufficiently small and get $ P^2 \le \epsilon^2 \ll 1 $. We expand the expression under the root in \eqref{char1d+} in a series in $ P $ and keep only the quadratic terms. Thus, we get $ P = \epsilon \sin (\theta + \theta_0) $. Without loss of generality, put $ \theta_0 = 0 $, substitute the expression for $ P $ into $ K $ and expand the result in a series in $ \epsilon $. We get
\begin{equation}\label{K_series}
K(\theta)=a-2b \cos 2\theta + O(\epsilon^4),\quad a=1-
\frac{3}{4}\epsilon^2,\quad b= -\frac{3}{8}\epsilon^2.
\end{equation}
If we neglect the terms of order greater than the second in \eqref{K_series} and substitute it  in \eqref {Hill}, we obtain the Mathieu equation, whose theory is well developed (for example,\cite {BEr67}, chapter 16). According to Floquet theory,
boundedness or unboundedness of solution of such an equation together with its derivative is completely determined by its characteristic exponent $ \mu $, defined as solution of the equation $\cosh \mu
\pi = z_1(\pi),$ where $z_1(\theta)$ is the solution to the Mathieu equation
subject to initial conditions $z_1(0)=1$, $z_1'(0)=0$, i.e. one of Mathieu functions. Unboundedness occurs for real $ \mu $, that is, if $|\cosh \mu \pi|>1$. According to the asymptotic formula \cite {BEr67}, Sec.16.3 (2), which
(taking into account the typo in the sign in the formula of Sec.16.2 (15))
has the form $$\cosh \mu \pi = \cos
\sqrt a \pi-\frac{\pi b^2}{(1-a)\sqrt a }\sin \sqrt a \pi
+O(b^4),\,b\to 0,$$ we get
$$\cosh \mu
\pi=-1-\frac{27}{2048}\pi^2\epsilon^6+O(\epsilon^8)<-1.
$$
In addition, since $ K> 0 $, any solution to the Mathieu equation is oscillating. Thus, for some finite time, $ z '(\theta) $ will reach any $ \lambda_0 $. The corollary is proved.
\end{proof}

\section{Traveling waves}\label{Sec4}
Physicists have long known solutions in the form of a wave traveling with a constant  velocity (quasistationary solutions) for cold plasma equations, including for the relativistic case.  An approximate analysis was performed in \cite {AL51,AP56}. In particular, it was known that the speed of a traveling wave is related to its smoothness and to the energy that the wave possesses. But we are interested in traveling waves primarily because they are the simplest example of solutions for which the first integral is constant in the whole space. Therefore, they can serve as an example of application of Theorem \ref{T2} and a counterexample showing that Corollary \ref{T4} is false without additional restrictions on the initial data. In addition, we will be interested in the type of singularity that the wave possesses in the absence of smoothness.

So, suppose that $P(\theta, \rho)={\mathcal P}(\xi)$
$E(\theta, \rho)={\mathcal E}(\xi)$, with a self-similar variable $\xi=\rho-w \theta$, $w=\rm const$.
Equation  \eqref{u1}  implies that functions
 ${\mathcal P}$ and ${\mathcal E }$ satisfy the autonomous system of ordinary differential equations
\begin{equation}\label{wave_sys}
(-w +\frac{{\mathcal P}}{\sqrt{1+{\mathcal P^2}}}){\mathcal P}'=-{\mathcal E},\qquad
(-w +\frac{{\mathcal P}}{\sqrt{1+{\mathcal P^2}}}){\mathcal E}'=\frac{{\mathcal P}}{\sqrt{1+{\mathcal P^2}}}.
\end{equation}
The system has a first integral
\begin{equation}\label{first_int1}
2(\sqrt{1+{\mathcal P}^2}-1)+{\mathcal E}^2={\mathcal I}^2=2(\sqrt{1+{\mathcal P}^2(0)}-1)+{\mathcal E}^2(0)=\rm const,
\end{equation}
similar to \eqref {first_int}, but which is constant not only along a specific characteristic. This means that we are in the situation of the special case, discussed in Sec.\ref{Sec3.1}. Thus, setting $ {\mathcal P} (0), {\mathcal E} (0) $, we obtain
\begin{equation}\label{wave_E}
{\mathcal E}=\pm\sqrt{{\mathcal I}^2-2(\sqrt{1+{\mathcal P}^2}-1)}.
\end{equation}
From \eqref{wave_sys}, \eqref{first_int1} we get the equation for determining the profile of a traveling wave
\begin{equation}\label{wave_P}
{\mathcal P}'   = \frac{\pm\sqrt{{\mathcal I}^2-2(\sqrt{1+{\mathcal P}^2}-1)}}{-w +\frac{{\mathcal P}}{\sqrt{1+{\mathcal P^2}}}}.
\end{equation}
Since $${\mathcal P}^2\le (\frac{{\mathcal I}^2}{2}+1)^2-1 (>0),$$
then under the condition
\begin{equation}\label{w}
w^2> 1-\frac{4}{({\mathcal I}^2+2)^2}
\end{equation}
the denominator in \eqref {wave_P} preserves its sign and the function $ {\mathcal P} $ at each half-period of $ \xi $ increases or decreases, changing between the values ${\mathcal P}_\pm= \pm \sqrt{(\frac{{\mathcal I}^2}{2}+1)^2-1}$.
The period $ X $ can be found as
\begin{equation*}\label{wave_X}
X   =2 \,{\rm sign} \,w
 \int\limits_{{\mathcal P}_-}^{{\mathcal P}_+} \frac{-w +\frac{{\mathcal P}}{\sqrt{1+{\mathcal P^2}}}}{\sqrt{{\mathcal I}^2-2(\sqrt{1+{\mathcal P}^2}-1)}} \, d \mathcal P.
\end{equation*}
\begin{center}
\begin{figure}[htb]
\begin{minipage}{0.495\columnwidth}
\includegraphics[width=2\columnwidth]{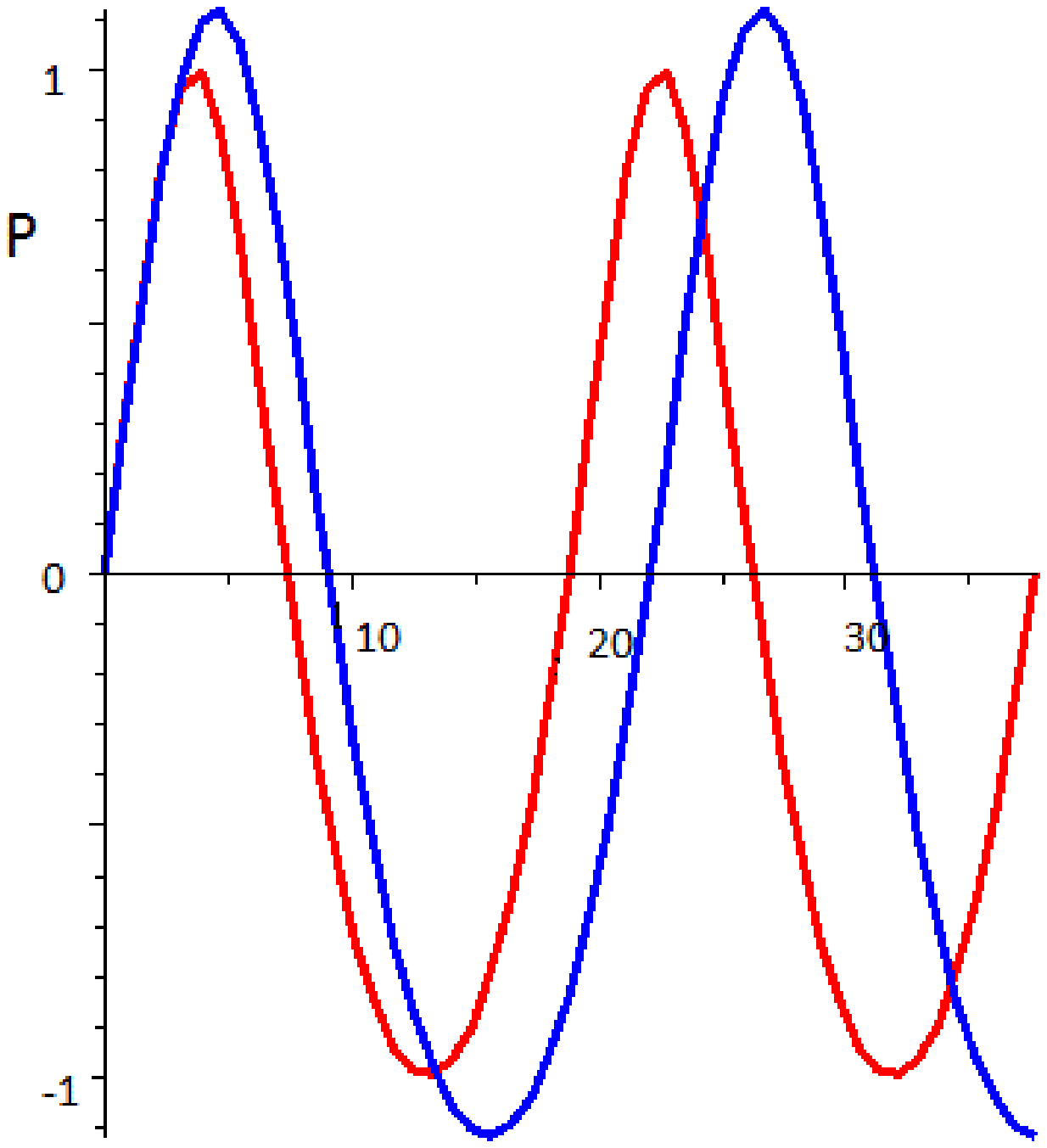}
\end{minipage}
\begin{minipage}{0.495\columnwidth}
\includegraphics[width=2\columnwidth]{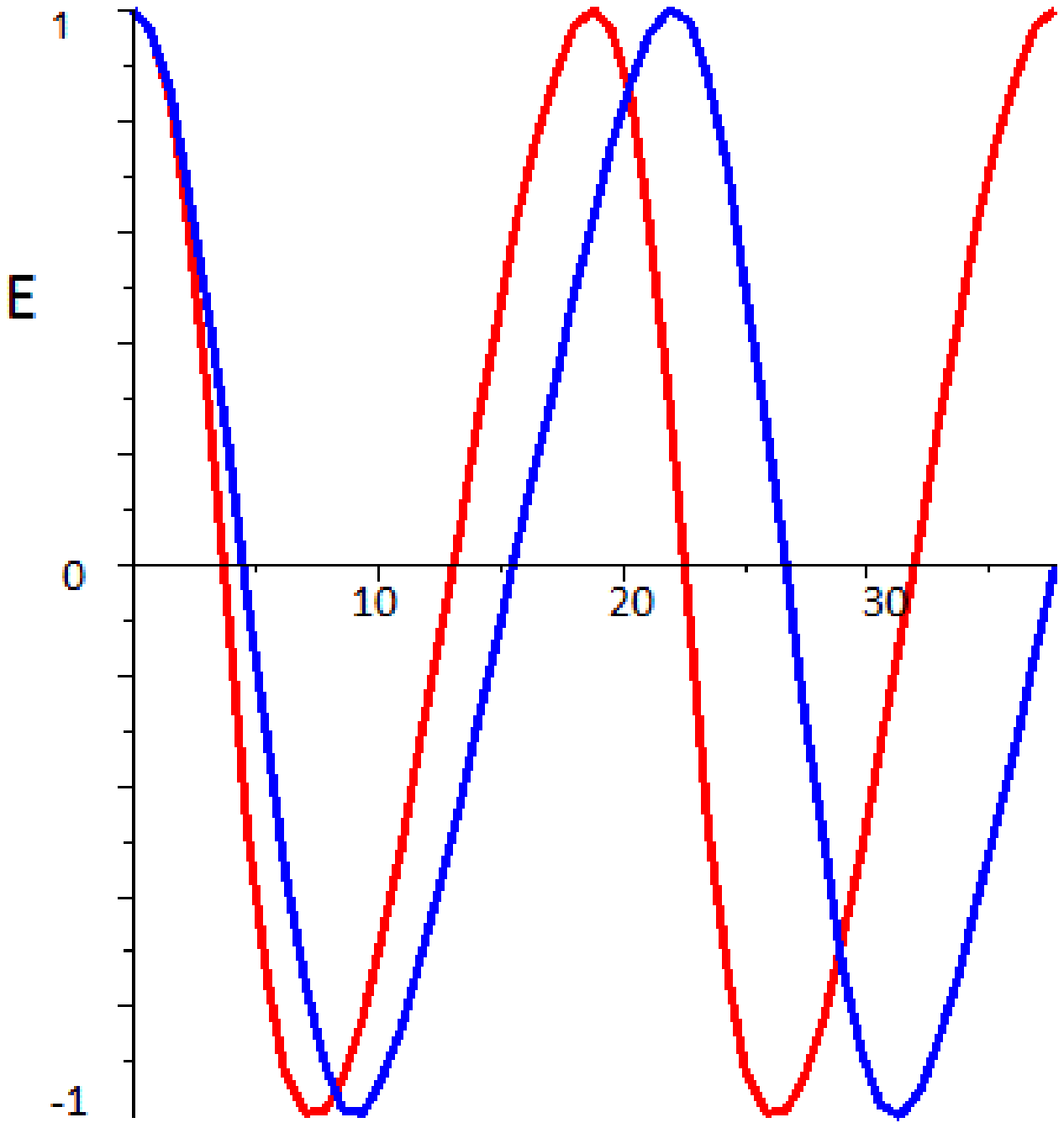}
\end{minipage}
\caption{Traveling wave profile for $ w = 3 $ at $ {\mathcal P} (0) = 0 $, $ {\mathcal E} (0) = 1 $. Two nonrelativistic periods in space, nonrelativistic and relativistic cases, red and blue graphs respectively.
Right: ${\mathcal P}(\xi)$. Left: ${\mathcal E}(\xi)$.}\label{Pic1}
\end{figure}
\end{center}
\begin{center}
\begin{figure}[htb]
\includegraphics[scale=0.33]{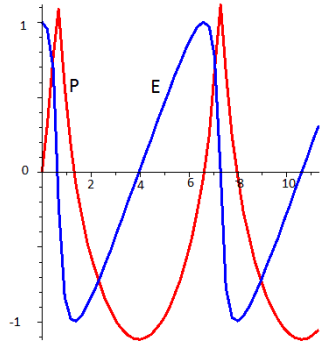}
\caption{Traveling wave profile at $ {\mathcal P} (0) = 0 $, $ {\mathcal E} (0) = 1 $ for $ w = 0.9 $ (the parameter is close to the critical one), relativistic case, $ {\mathcal P} (\xi) $ and $ {\mathcal E} (\xi) $, red and blue graphs respectively.}\label{Pic2}
\end{figure}
\end{center}
Note that if we choose $ w $ that does not satisfy the condition \eqref{w}, then we get a traveling wave with singularities that are similar to the emerging singularities of  smooth solutions obtained in numerical experiments (see the next paragraph, Fig.\ref{Pic4}), that is, the velocity component initially has a weak discontinuity, while the component of the electric field strength is like a step.

A traveling wave can also be constructed for the nonrelativistic case. In this case, the analogue of the equation \eqref{wave_P} can be explicitly integrated, as a result we get an implicit representation of the profile of the traveling wave  for $V(\theta, \rho)={\mathcal V}(\xi)$, namely,
\begin{equation*}\label{wave V}
 \xi+c= w \arcsin \frac{\mathcal V}{\mathcal I}_0 + \sqrt{{\mathcal I}_0^2-{\mathcal V}^2},
  \end{equation*}
$c= w \arcsin \frac{{\mathcal V}(0)}{{\mathcal I}_0} + \sqrt{{\mathcal I}_0^2-{\mathcal V}(0)^2}$, ${\mathcal I}_0^2={\mathcal V}^2(0)+{\mathcal E}^2(0)$. Moreover, the period with respect to the spatial variable can be easily calculated, $X=2\pi |w|$.

The condition for the wave speed to guarantee its smoothness in the nonrelativistic case is $ w^2> {\mathcal I}_0^2 $ and it is more stringent than \eqref{w}, because it dictates the growth of $ w $ along with the growth of "full energy" $ {\mathcal I}_0 $. In the relativistic case, the choice of $ w^2> 1 $ is sufficient to construct a smooth traveling wave.

 Fig.\ref{Pic1} shows the initial data profiles for a smooth traveling wave in the relativistic and nonrelativistic cases. Fig.\ref{Pic2} presents the process of forming a singularity as the parameter $ w $ approaches the critical value for the relativistic case. For the nonrelativistic case, with this value of the parameter, there is no longer a smooth solution. We see that the velocity component $ {\mathcal P} $ forms a weak singularity, while the component $ {\mathcal E} $ forms a step. The singularities of numerical solution in Fig.\ref{Pic4} have the same nature.

It should also be noted that choosing small $ {\mathcal I} $, we can construct a family of small perturbations of the trivial state, depending on the parameter $ w $, which determines the period of the wave in space. For sufficiently large $ w^2 $, this family is an infinitely smooth function in time. However, according to Corollary of Theorem 4, a small perturbation of the initial data in the general case leads to a solution for which smoothness is lost. Thus, small perturbations in the form of traveling waves are unstable in the class of all small perturbations.

Numerical results also suggest that all traveling waves, not only small, as well as other globally smooth solutions, the existence of which is guaranteed by Theorem \ref{T2} under a special condition on the data, can be also destroyed by infinitely small perturbations. The hypothesis is that in the relativistic case, the set of smooth initial data corresponding to a globally in time smooth solution has measure zero in the space of all smooth initial data. In practice, this means that globally smooth solutions are not observed in the relativistic case.


\section{Numerical experiments}\label{Sec5}
As the initial conditions \eqref{cd1}, we choose
\begin{equation}\label{gauss}
E_0(\rho) = \left(\dfrac{a_*}{\rho_*}\right)^2\rho \exp\left\{-2
\dfrac{\rho^2}{\rho_*^2}\right\},\quad P_{0}(\rho) = 0.
\end{equation}
According to \eqref {gauss} and \eqref {3gl4}, initially the maximum of density is at the origin. This choice is the most natural from the point of view of physics, since it simulates the effect on a rarefied plasma of a short powerful laser pulse when it is focused in a line (this can be achieved using a cylindrical lens), see \cite {Shep13} for details.

In \cite {CH18} it is shown that by selecting the parameters $ \rho_* $ and $ a_* $ in the initial conditions,
it is possible to obtain the singularity of electron density at an arbitrary moment in time when the electrons move from the origin.

The parameters characterize the scale of the localization domain and the maximum  of the electric field  $ E_ {\max} = a _*^2 / (\rho_* 2 \sqrt {{\rm e}}) \approx 0.3 a_*^2 / \rho_* $,
here $ {\rm e}  $ is the base of the natural logarithm.

\begin{figure}[h]
\includegraphics[scale=0.8]{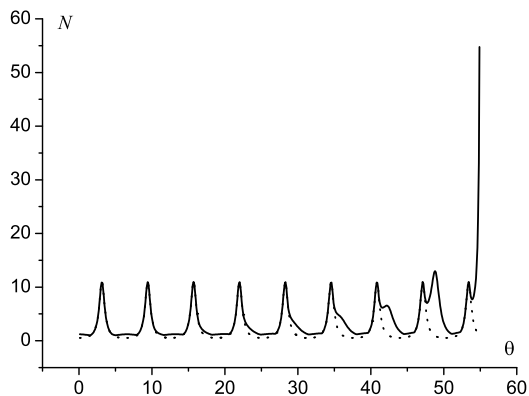}
\vspace{-0.5 cm}
\caption{
Dynamics of electron density: maximum in the region (solid line) and at the origin (dashed line).}\label{Pic3}
\end{figure}
\medskip
\medskip

 To demonstrate relativistic breakdown of oscillations, we choose the parameter values $ a_* = 2.07, \; \rho_* = 3.0 $ and note that, depending on their choice, the concentration of electrons in the center of the region can many times exceed the equilibrium (background) value equal to 1. These parameters lead to fluctuations of low intensity, when the amplitude of the oscillations is only about 10 times higher than the background value.

Since the initial conditions, by virtue of Theorem \ref{T3}, ensure the existence of a solution for more than one period, two trends can be noted in accordance with the theoretical analysis in the process of oscillations. The first of these is that the density fluctuations outside the coordinate origin are somewhat ahead in phase of the density fluctuations at the point $ \rho = 0 $ and this phase shift increases from period to period. The second trend is more obvious: over time, a gradual formation of the absolute maximum of density, located outside the origin, occurs.

 In Fig.\ref{Pic3}, the dotted line shows the change in time of electron density at the origin, and the solid line indicates the dynamics of the maximum value over the region.
At first, the oscillations are regular, i.e. global maxima and minima of density in the region succeed each other after half the period and are located at the origin. After the seventh regular (central) maximum at time moment $ \theta \approx 42.2 $, a new structure arises, it is the maximum electron density outside the origin, while regular oscillations continue to be observed in the vicinity of the origin. The newly arising maximum, at the moment $ \theta \approx 48.8 $, increases approximately twice in size and on the next period increases at $ \theta \approx 55.1 $. In its place a delta-shaped singularity of electron density arises.

\begin{center}
\begin{figure}[h]
\includegraphics[scale=0.8]{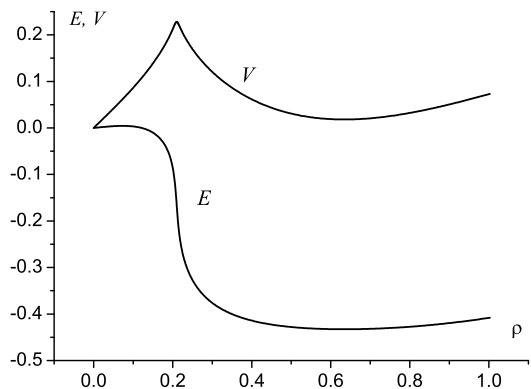}
\vspace{-0.5 cm}
\caption{
Spatial distribution of velocity and electric field at the moment of formation of the second off-axis maximum.}\label{Pic4}

\end{figure}
\end{center}

Fig.\ref{Pic4} shows the spatial distributions of the velocity $ V $ and the electric field $ E $ at the moment $ \theta \approx 48.8 $, when the absolute maximum outside the coordinate origin was completely formed.

Note that, due to the structure of equations \eqref {u1}, its solution will remain an odd  function of $\rho $  if the initial data have this property. The initial data \eqref {gauss} is just that, so we will present the results of numerical calculations only on the positive axis.

We see that in the vicinity of the density maximum the velocity component forms a discontinuity of the derivative (weak discontinuity), but not of the function itself, while the electric field function forms a strong discontinuity. It is precisely such qualitative characteristics of $ V $ and $ E $ that provide the breakdown of oscillations at the moment $ \theta \approx 55.1 $. It is important to note that the breakdown has a character of a `` gradient catastrophe'', i.e. the functions $ V $ and $ E $ themselves remain bounded.

Numerical experiments show that in the relativistic case formation of singularities is the rule.

\section*{Conclusion and discussion}

Let us dwell on the applicability of the results obtained. First note that the electron density  $ N (\rho, \theta) $ should be nonnegative in the process of oscillation. The Gauss theorem \cite {D,david72}, which relates the electron density to a function that describes the electric field in the dimensionless variables, has the form
$$
\dfrac{\partial E(\rho,\theta)}{\partial \rho} = 1 -  N(\rho,\theta).
$$
In turn,  \eqref {crit2} implies that the inequality $ 1 - 2 \, e - p^2> 0 $ persists in time; therefore, for an arbitrary $ \theta> 0 $ we have the previously obtained  lower bound for the electron density $ N (\rho, \theta)> 1/2 $ on an infinite oscillation time interval.
(see  \cite{david72} for a different derivation). This property is important for the theoretical justification of approximate
methods for modeling plasma oscillations.

In addition, we note that condition \eqref {crit2} allows us to formulate the generalized Riemann problem
(with piecewise linear initial conditions) for equations \eqref {u2}, the solution of which does not have a strong singularity of the electron density function. The practical benefit of solving the generalized Riemann problem is the construction of numerical scheme of the second order of accuracy in time and space for the Cauchy problem considered here \cite{KPS12}.

Finally, as follows  from our results,  approximate methods for problem \eqref{u1}, \eqref{cd1},  must take into account the possibility of the effect of breaking of oscillations as a completion of calculations. In particular, this means the necessity  for  refinement of the discretization parameters when using schemes in Euler variables. Thus, the results of this work are of fundamental importance for the development of the theory of computational methods and practical calculations of problems on oscillatory motions of a cold plasma.

Note that there is a very large number of works devoted to the analysis of solutions to the Cauchy problem for hyperbolic systems, which are various versions of models of magnetohydrodynamics with relativistic effects, for which there are results on the finite-time formation of a singularity (a survey can be found, for example, in \cite{Ren05}). In addition, processes in a collisionless plasma can be modeled using the Vlasov-Poisson equations, that is, based on the kinetic approach. For such models, the question of the conditions for the existence of a smooth solution and the possibility of the formation of singularities was also investigated (see the review \cite{Skub14} and the references). Based on the kinetic approach, the  traveling wave solutions (Bernstein-Green-Kruskal modes) are also known.

However, there are extremely few models in which the criterion for the formation of singularities in terms of initial data can be obtained, and the nature of this singularity can be theoretically analyzed. In the present work, we study one of these models, which is very informative and at the same time relatively simple.

We can say that the main practical conclusion of this paper is that in the general case, all non-trivial relativistic oscillations break down with time.

\section*{Acknowledgment}
Partially supported by the Moscow Center for Fundamental and Applied Mathematics.


\end{document}